\newcommand{\myparskip}{3pt}
\newtheorem{lemma}{Lemma}[section]
\newtheorem{definition}[lemma]{Definition}
\newtheorem{prop}[lemma]{Proposition}
\newtheorem{conjecture}{Conjecture}
\newenvironment{proof}{\vspace{-0.15in}\noindent{\bf Proof:}}%
        {\hspace*{\fill}$\Box$\par}
        {\hspace*{\fill}$\Box$\par\vspace{4mm}}
        {\hspace*{\fill}$\Box$\par}
\newcommand{\eps}{\epsilon}
\newcommand{\ceil}[1]{\lceil #1 \rceil}
\newcommand{\floor}[1]{\lfloor #1 \rfloor}
\newcommand{\opt}{\text{\sc OPT}}
\newcommand{\deadline}{{\sc Orient-Deadline}}
\newcommand{\tw}{{\sc Orient-TW}}
\DeclareMathAlphabet{\mathpzc}{OT1}{pzc}{m}{it}
\newcommand{\lmax}{L_{\max}}
\newcommand{\lmin}{L_{\min}}
\newcommand{\cR}{{\cal R}}
\begin{document}

\title{Approximation Algorithms for Orienteering with Time Windows}
\author{
Chandra Chekuri\thanks{Dept.\ of Computer Science, University of Illinois, Urbana, IL 61801.
Partially supported by NSF grant CCF 07-28782. {\tt chekuri@cs.uiuc.edu}}
\and
Nitish Korula\thanks{Dept.\ of Computer Science, University of Illinois, Urbana, IL 61801.
Partially supported by NSF grant CCF 07-28782. {\tt nkorula2@uiuc.edu}}
}

\maketitle

\begin{abstract}
  Orienteering is the following optimization problem: given an
  edge-weighted graph (directed or undirected), two nodes $s,t$ and a
  time limit $T$, find an $s$-$t$ {\em walk} of total length at most
  $T$ that maximizes the number of {\em distinct} nodes visited by the
  walk. One obtains a generalization, namely orienteering with {\em
  time-windows} (also referred to as TSP with time-windows), if each
  node $v$ has a specified time-window $[R(v), D(v)]$ and a node $v$
  is counted as visited by the walk only if $v$ is visited during its
  time-window. For the time-window problem, an $O(\log \opt)$
  approximation can be achieved even for directed graphs if the
  algorithm is allowed {\em quasi-polynomial} time.  However, the best
  known polynomial time approximation ratios are $O(\log^2 \opt)$ for
  undirected graphs and $O(\log^4 \opt)$ in directed graphs. In
  this paper we make some progress towards closing this discrepancy,
  and in the process obtain improved approximation ratios in several
  natural settings.

  Let $L(v) = D(v) - R(v)$ denote the length of the time-window for
  $v$ and let $\lmax = \max_v L(v)$ and $\lmin = \min_v L(v)$. Our
  results are given below with $\alpha$ denoting the known
  approximation ratio for orienteering (without time-windows). Currently
  $\alpha = (2+\eps)$ for undirected graphs and $\alpha = O(\log^2 \opt)$
  in directed graphs.
  \begin{itemize}
  \item An $O(\alpha \log \lmax)$ approximation when $R(v)$ and $D(v)$
   are integer valued for each $v$.
  \item An $O(\alpha \max\{\log \opt, \log \frac{L_{max}}{L_{min}}\})$
    approximation.
  \item An $O(\alpha \log \frac{L_{max}}{L_{min}})$ approximation when
    no start and end points are specified.
  \end{itemize}
  In particular, if $\frac{L_{max}}{L_{min}}$ is poly-bounded, we
  obtain an $O(\log n)$ approximation for the time-window problem in
  undirected graphs.
\end{abstract}

\section{Introduction}
\label{sec:intro}

In the orienteering problem, we are given an edge-weighted graph
$G(V,E)$, two vertices $s,t \in V$, and a time limit $T$. The goal is
to find a walk that begins at $s$ at time $0$, reaches $t$ before time
$T$, and visits as many vertices as possible. (The weight or length of
an edge denotes the time taken to travel from one endpoint to the
other.)  Note that a vertex may be visited many times, but is only
counted once in the objective function. In other words, the goal is to
find an $s$-$t$ walk of total length at most $T$ that maximizes the
number of distinct vertices visited by the walk. This problem is also
referred to as the {\em point-to-point} orienteering problem to
distinguish it from two special cases: only the start vertex $s$ is
specified, or neither $s$ nor $t$ are specified. Here we consider a
more general problem, namely orienteering with time-windows. In this
problem, we are additionally given a time-window (or interval)
$[R(v), D(v)]$ for each vertex $v$. A vertex is counted as visited
only if the walk visits $v$ at some time $t \in [R(v), D(v)]$. For
ease of notation, we use \tw~to refer to the problem of orienteering
with time-windows. A problem of intermediate complexity is the one
in which $R(v) = 0$ for all $v$. We refer to this problem as orienteering
with deadlines (\deadline); it has also been called the Deadline-TSP
problem in \cite{timewindow}. The problem where vertices have release
times but not deadlines (that is, $D(v) = \infty$ for all $v$) is
entirely equivalent to \deadline.

The orienteering problem and the time-window generalization are
intuitively appealing variants of TSP. They also arise in practical
applications of vehicle routing and scheduling \cite{vehicle_book}.
Even in undirected graphs these problems are NP-Hard and also APX-hard
\cite{orienteering}. In fact \tw~is NP-hard even on the line
\cite{Tsitsiklis92}. Although these problems are natural
and simple to state, the first non-trivial approximation algorithm for
undirected graphs appeared only a few years ago \cite{orienteering},
and the first polynomial time approximation algorithm for directed
graphs appeared only this year \cite{NagarajanR07, CKP};
earlier, a constant factor approximation was known for points in the
Euclidean plane \cite{ArkinMN98}.  Table~\ref{table:results}
summarizes the best known approximation ratios.

\begin{table}
\begin{center}
\begin{tabular}{|c|c|c|c|}
\hline
 & Points in Euclidean Space & Undirected Graphs & Directed Graphs \\ \hline
Orienteering & $(1+\eps)$ \cite{ChenH06} & $(2+\eps)$ \cite{CKP} & $O(\log^2 \opt)$
             \cite{NagarajanR07,CKP} \\ \hline
\deadline & $O(\log \opt)$ & $O(\log \opt)$ \cite{timewindow} & $O(\log^3 \opt)$ \\ \hline
\tw~     & $O(\log^2 \opt)$ & $O(\log^2 \opt)$ \cite{timewindow} & $O(\log^4 \opt)$ \\ \hline
\end{tabular}
\end{center}
\caption{Known approximation ratios for orienteering and orienteering with time-windows. Entries without a citation come from combining results for orienteering
with the results from \cite{timewindow} in a black-box fashion. \label{table:results}}
\end{table}

In \cite{ChekuriP05}, a recursive greedy algorithm is given for the
orienteering problem when the reward function is a given monotone
submodular set function\footnote{A function $f: 2^V \rightarrow \cR^+$
  is a montone submodular set function if $f$ satisfies the following
  properties: (i) $f(\emptyset) = 0$, $f(A) \ge f(B)$ for all $A
  \subseteq B$ and (ii) $f(A) + f(B) \ge f(A \cup B) + f(A \cap B)$
  for all $A, B \subseteq V$.}  $f$ on $V$, and the objective is to
maximize $f(S)$ where $S$ is the set of vertices visited by the walk.
Several non-trivial problems, including \tw, can be captured by using
different submodular functions. The algorithm
from \cite{ChekuriP05} provides an $O(\log \opt)$ approximation in
directed graphs, but it runs in {\em quasi-polynomial} time. Thus, we
make the following natural conjecture:

\begin{conjecture}
There is a polynomial time $O(\log \opt)$ approximation
for orienteering with time-windows in directed (and undirected) graphs.
\end{conjecture}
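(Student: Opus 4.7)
The plan is to route the conjecture through the three theorems already proven in this paper, so that the only remaining task is a preprocessing step that drives $L_{\max}/L_{\min}$ down to polynomial in $n$. In the undirected case, where $\alpha = 2+\eps$, this alone would suffice: the middle bullet gives $O(\max\{\log \opt, \log (L_{\max}/L_{\min})\}) = O(\log n)$. I would therefore first attempt a window-length bucketing, grouping vertices by $\floor{\log L(v)}$ into $O(\log L_{\max})$ classes, and then argue that it is enough to collect reward from $O(1)$ classes on any given sub-walk. A charging argument on \opt{} --- classifying each vertex collected by the optimum walk according to the ratio between its window length and the travel time of the sub-portion of \opt{} that visits it --- looks promising here, and resembles decompositions used for orienteering on trees.

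Once such a decomposition exists, the natural next step is to mimic the recursive structure of the Chekuri--Pal quasi-polynomial algorithm, but recursing only on midpoints that lie on the boundaries between consecutive bucket classes rather than on arbitrary vertices. Since there are only $O(\log L_{\max})$ boundaries, the total branching becomes $(\log L_{\max})^{O(\log \opt)}$, which is polynomial whenever $L_{\max}$ is poly-bounded; the first and second bullets of the paper then handle the base cases. For directed graphs the same plan additionally requires improving the basic orienteering approximation from $O(\log^2 \opt)$ to $O(\log \opt)$, which is itself a separate well-known open question and would presumably be handled by a tighter LP-based analysis of \cite{NagarajanR07,CKP}.

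The main obstacle, common to both cases, is the quasi-polynomial barrier in submodular orienteering: every known $O(\log \opt)$ algorithm relies on guessing $\Theta(\log \opt)$ midpoints of the optimal walk, and this blowup appears intrinsic to the submodular reward structure that captures \tw. Replacing this guessing by a polynomial-time LP-rounding or primal-dual argument is where all prior attempts seem to break down, and I expect it to be the crux here as well. A concrete intermediate milestone would be to resolve the conjecture on the line, where \tw{} is already NP-hard \cite{Tsitsiklis92} but no polynomial-time $O(\log \opt)$ approximation is known that avoids the quasi-polynomial recursion; a positive answer there would strongly suggest the structural ingredients needed for the general case.
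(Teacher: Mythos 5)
This is a conjecture stated by the paper as an open problem; there is no proof of it in the paper, and the paper's own results fall short of it (the best ratios achieved are $O(\alpha \log \lmax)$ and $O(\alpha \max\{\log\opt, \log L\})$, with $\alpha = O(\log^2\opt)$ for directed graphs). Your proposal is honest about this --- it reads as a research plan rather than a proof --- but since the task was to supply a proof attempt, the gaps are worth naming precisely.

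The central gap is circular: you propose to drive $L = \lmax/\lmin$ down to polynomial in $n$ via bucketing, then invoke the paper's $O(\alpha\max\{\log\opt,\log L\})$ result, but the bucketing step is exactly where the difficulty lies and you give no argument that it can be done while losing only a constant factor. The claim that it suffices to collect reward from $O(1)$ window-length classes on any sub-walk is asserted ("looks promising"), not proven, and if it fails the reduction does not go through. Your branching estimate $(\log \lmax)^{O(\log\opt)}$ is polynomial only when $\lmax$ is already poly-bounded, in which case the paper's first result already gives $O(\alpha\log\lmax) = O(\alpha\log n)$ and the recursion buys nothing new; when $\lmax$ is large the recursion is still quasi-polynomial, which is precisely the barrier the conjecture asks to overcome. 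Finally, for directed graphs you explicitly defer to improving the black-box orienteering ratio from $O(\log^2\opt)$ to $O(\log\opt)$, which is a separate open problem that the paper itself highlights in its conclusions. The proposal thus identifies the right obstacles but closes none of them, so as a proof of the conjecture it does not stand.
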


As can be seen from Table~\ref{table:results}, even in undirected
graphs the current best ratio is $O(\log^2 \opt)$. Our primary
motivation is to close the gap between the ratios achievable in
polynomial and quasi-polynomial time respectively. We remark that the
quasi-polynomial time algorithm in \cite{ChekuriP05} is quite
different from all the other polynomial time algorithms, and it does
not appear easy to find a polynomial time equivalent. In this paper we
make some progress in closing the gap, while also obtaining some new
insights. An important aspect of our approach is to understand the
complexity of the problem in terms of the maximum and minimum time-window
lengths. Let $L(v) = D(v) - R(v)$ be the length of the time-window of $v$.
Let $\lmax = \max_v L(v)$ and $\lmin = \min_v L(v)$. Our results depend on
the ratio $L = \lmax/\lmin$.\footnote{If $\lmin = 0$, consider the set of
vertices which have zero-length time-windows. If this includes a
significant fraction of the vertices of an optimal solution, use
dynamic programming to get a $O(1)$-approximation. Otherwise,
we can ignore these vertices and assume $\lmin > 0$ without
losing a significant fraction of the optimal reward.}
We define this parameter following the work of Frederickson
and Wittman \cite{FW07}; they showed that a constant factor
approximation is achievable in undirected graphs if all time-windows
are of the same length (that is, $L=1$) and the end points of the walk are
not specified. We believe this is a natural parameter to consider in the
context of time-windows. In many practical settings $L$ is likely to be
small, and hence, algorithms whose performance depends on $L$ may be
better than those that depend on other parameters. In \cite{timewindow}
an $O(\log D_{\max})$ approximation is given for \tw~in undirected graphs
where $D_{\max} = \max_v D(v)$ and only the start vertex $s$ is specified
(here it is assumed that all the input is integer valued). We believe
that $\lmax$ is a better measure than $D_{\max}$ for \tw; note that
$\lmax \le D_{\max}$ for all instances.

\medskip
\noindent {\bf Results:} We obtain results for both undirected and
directed graphs.  Our results are for \tw~and use
an algorithm for the point-to-point orienteering problem as a black
box. Letting $\alpha$ denote the approximation ratio for the
orienteering problem, we have the following results. 
\begin{itemize}
\item An $O(\alpha \log \lmax)$ approximation when $R(v)$ and $D(v)$
  are integer valued for each $v$.
\item An $O(\alpha \max\{\log \opt, \log L\})$
  approximation.
\item An $O(\alpha \log L)$ approximation when
  no start and end points are specified.
\end{itemize}

We briefly compare our results to previous results to put the
improvements in a proper context. We focus on undirected graphs where
$\alpha = O(1)$.  The $O(\log \lmax)$ approximation improves on the
$O(\log D_{\max})$ approximation from \cite{timewindow} in several ways.
First, $\lmax \le D_{\max}$ in all instances and is considerably smaller in
many instances. Second, our algorithm applies to directed graphs while
the algorithm in \cite{timewindow} is applicable only for undirected
graphs. Third, our algorithm is for the point-to-point version while
the one in \cite{timewindow} does not guarantee that the walk ends at $t$.
The $O(\max\{\log \opt, \log L\})$ ratio improves the $O(\log^2 \opt)$
approximation in \cite{timewindow} on instances where $L$ is not too large;
in particular if $L$ is poly-bounded then the ratio we obtain is $O(\log
n)$ while the previous guarantee is $O(\log^2 n)$ when expressed as a
function of $n$, the number of vertices in $G$. Finally, our bound
of $O(\log L)$ strictly generalizes the result of \cite{FW07} who
consider only the case of $L = 1$.

Our results are obtained using relatively simple ideas.  Nevertheless,
we believe that they are interesting, useful and shed more light on
the complexity of the problem. In particular we are optimistic that
some of these ideas will lead to an $O(\log n)$ approximation for the
time-window problem in undirected graphs even when $L$ is not
poly-bounded.

\medskip {\bf Related Work:} Table~\ref{table:results} refers to a
good portion of the recent work on approximation algorithms for
orienteering and related problems.  The first non-trivial
approximation algorithm for orienteering was a $(2+\eps)$-approximation
in the Euclidean plane \cite{ArkinMN98}.  In \cite{orienteering}, the
authors showed how one can use an approximation for the $k$-stroll problem
(here, the goal is to find a minimium length $s$-$t$ walk that visits $k$
vertices) to obtain an approximation for orienteering. In undirected
graphs an approximation for $k$-stroll can be obtained using an
approximation for the more well-studied $k$-MST problem although one
can obtain improved ratios for $k$-stroll using related ideas
\cite{ChaudhuriGRT03}. In \cite{timewindow}, orienteering is used as a
black box for \deadline~and \tw. For directed graphs a bi-criteria
approximation for $k$-stroll was only recently obtained
\cite{NagarajanR07,CKP} and this led to the first
poly-logarithmic approximation for orienteering and \tw.
The recursive greedy algorithm from \cite{ChekuriP05}, as
discussed before, is based on a different approach. Other special
cases have been considered in the literature. For points on a line, an
$O(\min\{\log n, \log L\})$ approximation is given in
\cite{Bar-YehudaES}. When the number of distinct time-windows is a fixed
constant, \cite{ChekuriK04} gives an $O(\alpha)$ approximation; here
$\alpha$ is the approximation ratio for orienteering.
As we already mentioned, \cite{FW07} considered the case of equal
length time-windows.

\section{Preliminaries and General Techniques}
\label{sec:prelims}
Much of the prior work on orienteering with time-windows, following
\cite{timewindow}, has used the same general technique or can be cast in this
framework: Use combinatorial methods to reduce the problem to a collection of
sub-problems where the time-windows can be ignored. Each sub-problem has a
subset of vertices $V'$, start and end vertices $s',t' \in V'$, and a
time-interval $I$ in which we must travel from $s'$ to $t'$, visiting as many
vertices of $V'$ within their time windows as possible. However, the
sub-problem is constructed such that the time-window for every vertex in $V'$
entirely contains the interval $I$. Therefore, the sub-problem is really an
instance of the orienteering problem (without time-windows). An approximation
algorithm for orienteering can be used to solve each sub-problem, and these
solutions can be pasted together using dynamic programming. The next two
sub-sections describe this framework.

One consequence of using this general method is that the techniques we
develop apply to both directed and undirected graphs; while
solving a sub-problem we use either the algorithm for orienteering
on directed graphs, or the algorithm for undirected graphs.
Better algorithms for either of these problems would immediately
translate into better algorithms for orienteering with time-windows.

Subsequently, we mainly study undirected graphs, and state our
results in that context. The corresponding approximation ratios
for directed graphs are a factor of $O(\log^2 \opt)$ higher; 
this is simply because $O(\log^2 \opt)$ is the ratio between the
current best approximations for orienteering (without time-windows)
in directed and undirected graphs.

Recall that $\lmax$ and $\lmin$ are the lengths of the longest and
shortest time time-windows respectively, and $L$ is the ratio
$\frac{\lmax}{\lmin}$.
We first provide two algorithms with the following guarantees:
\begin{itemize}
\item $O(\log \lmax)$, if the release time and deadline of every
vertex are integers.
\item $O(\log n)$, if $L \le 2$.
\end{itemize}

The second algorithm immediately leads to a $O(\log n \times \log L
)$-approximation for the general time-window problem, which is already an
improvement on $O(\log^2 n)$ when the ratio $L$ is small. However, we
can combine the first and second algorithms to obtain a $\max\{O(\log n),
O(\log L)\}$-approximation for orienteering with time-windows.

Throughout this paper, we use $R(v)$ and $D(v)$ to denote
(respectively) the release time and deadline of a vertex $v$.
We also use the word \emph{interval} to denote a time
window; $I(v)$ denotes the interval $[R(v), D(v)]$. Typically, we
use `time-window' when we are interested in the start and end
points of a window, and `interval' when we think of a window as
an interval along the `time axis'. 

For any instance $X$ of \tw, we let $\opt(X)$
denote the reward collected by an optimal solution for $X$. When
the instance is clear from context, we use $\opt$ to denote this
optimal reward.

\subsection{The General Framework}

As described at the beginning of section \ref{sec:prelims}, our
general method to solve \tw~is to reduce
the problem to a set of sub-problems without time-windows.
Given an instance of \tw~on a graph $G(V,E)$,
suppose $V_1, V_2, \ldots V_m$ partition $V$, and we can
associate times $R_i$ and $D_i$ with each $V_i$ such that each
of the following conditions holds:
\vspace{-0.1in}
\begin{itemize}
  \item For each $v \in V_i$, $R(v) \le R_i$ and $D(v) \ge D_i$.
  \item For $1 \le i < m$, $D_i < R_{i+1}$.
  \item An optimal solution visits any vertex in $V_i$ during
  $[R_i, D_i]$.
\end{itemize}
\vspace{-0.1in}
Then, we can solve an instance of the orienteering problem in each
$V_i$ separately, and combine the solutions using dynamic
programming. The approximation ratio for such ``composite'' solutions
would be the same as the approximation ratio for the orienteering
problem. We refer to an instance of \tw~in which we can construct
such a partition of the vertex set (and solve the sub-problems
separately) as a \emph{modular instance}. Subsection
\ref{subsec:dyn-prog} describes a dynamic program that can solve
modular instances.

Unfortunately, given an arbitrary instance of \tw, it is unlikely
to be a modular instance. Therefore, we define restricted versions
of a given instance:

\begin{definition} Let $A$ and $B$ be instances of the time-window
problem on the same underlying graph (with the same edge-weights),
and let $I_A(v)$ and $I_B(v)$ denote the intervals for vertex $v$ in
instances $A$ and $B$ respectively. We say that $B$ is a
\emph{restricted version} of $A$ if, for every vertex $v$, $I_B(v)$
is a sub-interval of $I_A(v)$.
\end{definition}

Clearly, a walk that gathers a certain reward in a restricted version
of an instance will gather at least that reward in the original instance.
We attempt to solve \tw~by constructing a set of
restricted versions that are easier to work with. Typically, the
construction is such that the reward of an optimal solution in at
least one of the restricted versions is a significant fraction of the
reward of an optimal solution in the original instance. Hence, an
approximation to the optimal solution in the `best' restricted version
leads us to an approximation for the original instance.

This idea leads us to the next proposition, the proof of which is
straightforward, and hence omitted.
\begin{prop}\label{prop:restricted-versions}
Let $A$ be an instance of \tw~on a graph $G(V,E)$.
If $B_1, B_2, \ldots B_\beta$ are restricted versions of $A$, and for
all vertices $v \in V$, $I_A(v) = \bigcup_{1 \le i \le \beta}I_{B_i}(v)$,
there is some $B_j$ such that $\opt(B_j) \ge \frac{\opt(A)}{\beta}$.
\end{prop}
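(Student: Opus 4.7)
The plan is a one-line pigeonhole argument applied to an optimal walk for $A$. First I would fix any optimal walk $W^\star$ for the instance $A$, collecting reward $\opt(A)$, and let $S^\star \subseteq V$ be the set of vertices visited by $W^\star$ within their $A$-time-windows, so $|S^\star| = \opt(A)$. For each $v \in S^\star$, let $t_v$ denote the time at which $W^\star$ visits $v$; feasibility in $A$ gives $t_v \in I_A(v)$.

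Next I would use the covering hypothesis $I_A(v) = \bigcup_{i=1}^{\beta} I_{B_i}(v)$ to assign, for each $v \in S^\star$, some index $i(v) \in \{1,\ldots,\beta\}$ with $t_v \in I_{B_{i(v)}}(v)$, breaking ties arbitrarily. This partitions $S^\star$ into $\beta$ classes $S_1, \ldots, S_\beta$, where $S_j = \{v \in S^\star : i(v) = j\}$.

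By averaging, there is some index $j$ with $|S_j| \ge |S^\star|/\beta = \opt(A)/\beta$. Since restricted versions share the same underlying graph and edge weights as $A$, the walk $W^\star$ is itself a feasible walk in the instance $B_j$ (it has the same length and endpoints). Moreover, for every $v \in S_j$ we have $t_v \in I_{B_j}(v)$ by construction, so $W^\star$ collects reward at least $|S_j|$ in $B_j$. Therefore $\opt(B_j) \ge |S_j| \ge \opt(A)/\beta$, as claimed.

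The main step to double-check—though it is not really an obstacle—is that feasibility of the walk transfers from $A$ to each $B_j$. This follows immediately from the definition of a restricted version, which modifies only the time windows and leaves the graph $G$, the edge weights, and hence the set of length-$T$ walks unchanged. This is why the statement admits only the above one-paragraph proof and the authors describe it as straightforward.
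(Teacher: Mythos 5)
Your proof is correct. The paper actually omits the proof entirely, describing it as ``straightforward,'' and the pigeonhole argument you give---classify each vertex of an optimal solution by which $I_{B_i}(v)$ contains its visit time, then note the same walk is feasible in the $B_j$ receiving the most reward---is precisely the intended argument.
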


The restricted versions we construct will usually be modular instances
of \tw. Therefore, the general algorithm for \tw~is:
\begin{enumerate}
\item Construct a set of $\beta$ restricted versions of the
given instance; each restricted version is a modular instance.
\item Pick the best restricted version (enumerate over all choices),
find an appropriate partition, and use an $\alpha$-approximation for
orienteering together with dynamic programming to solve that instance.
\end{enumerate}
It follows from the previous discussion that this gives a 
$(\alpha \times \beta)$-approximation for \tw.
We next describe how to solve modular instances of \tw.

\subsection{A dynamic program for modular instances}
\label{subsec:dyn-prog}

Recall that a modular instance is an instance of \tw~on a graph
$G(V,E)$ in which the vertex set $V$ can be
partitioned into $V_1, V_2, \ldots V_m$, such that an optimal
solution visits vertices of $V_i$ after time $R_i$ and before $D_i$.
For any vertex $v \in V_i$, $R(v) \le R_i$ and $D(V) \ge D_i$.
Further, vertices of $V_i$ are visited before vertices of $V_j$,
for all $j > i$. 

To solve a modular instance, for each $V_i$ we could `guess' the
first and last vertex visited by an optimal solution, and guess
the times at which this solution visits the first and last vertex.
If $\alpha$ is the approximation ratio of an algorithm for
orienteering, we find a path in each $V_i$ that collects an
$\alpha$-fraction of the optimal reward, and combine these solutions.

More formally, we use the following dynamic program: For any
$u,v \in V_i$, consider the graph induced by $V_i$, and let
$\opt(u,v,t)$ denote the optimal reward collected by any walk from
$u$ to $v$ of length at most $t$ (ignoring time-windows).
Now, define $\Pi_i(v,T)$ for $v \in V_i, R_i \le T \le D_i$ as the
optimal reward collected by any walk in $G$ that begins at $s$ at time
0, and ends at $v$ at time $T$. Given $\opt(u,v,t)$, the following
recurrence allows us to easily compute $\Pi_i(v,T)$:
\[ \Pi_i(v,T) = \max_{u \in V_i, w \in V_{i-1}, t \le T -R_i} \opt(u,v,t) + 
\Pi_{i-1} (w, T-t-d(w,u)). \]

Of course, we cannot exactly compute $\opt(u,v,t)$; instead, we use an
$\alpha$-approximation algorithm for orienteering to compute
an approximation to $\opt(u,v,t)$ for all $u,v \in V_i$, $t \le D_i - R_i$.
This gives an $\alpha$-approximation to $\Pi_i(v,T)$ using the recurrence
above.

Unfortunately, the running time of this algorithm is polynomial in
$T$; this leads to a pseudo-polynomial algorithm. To obtain a polynomial-time
algorithm, we use a standard technique of dynamic programming based on reward
instead of time (see \cite{orienteering}, \cite{ChekuriP05}).  Using standard
scaling tricks for maximization problems, one can reduce the problem
with arbitrary rewards on the vertices to the problem where the reward
on each vertex is $1$; the resulting loss in approximation can be made
$(1 + o(1))$.

To construct a dynamic program based on reward instead of time, we ``guess''
the reward $k_i$ collected by an optimal solution in each $V_i$. However,
it is important that we do not try to find an (approximately) shortest
path in each $V_i$ that collects reward $k_i$, since taking slightly too
much time early on can have bad consequences for later $V_i$s. Instead,
we use binary search to compute the shortest walk we can find that
collects reward at least $k_i/\alpha$; this walk is guaranteed to be no
longer than the optimal walk that collects reward $k_i$ from $V_i$.
We then combine the solutions from each $V_i$ using a dynamic
program very similar to the one described above for times. We omit
details in this version.

\section{The Algorithms}
\label{sec:algos}

In this section, we use the techniques described above to 
develop algorithms which achieve approximation ratios depending on the
lengths of the time-windows. We first consider instances where all
time-windows have integral end-points, and then instances for which
the ratio $L = \frac{\lmax}{\lmin}$ is bounded. Finally, we combine
these ideas to obtain a $\max\{O(\log n), O(\log L)\}$-approximation
for all instances of \tw.

\subsection{An $O(\log L_{max})$-approximation}
\label{subsec:integer-endpoints}

We now focus on instances of \tw~in which, for all vertices $v$,
$R(v)$ and $D(v)$ are integers. Our algorithm requires the following
simple lemma:

\begin{lemma}\label{lemma:interval-partition}
Any interval of length $M>1$ with integral endpoints can be partitioned
into at most $2 \log M$ disjoint sub-intervals, such that the length of any
sub-interval is a power of 2, and any sub-interval of length $2^i$ begins
at a multiple of $2^i$. Further, there are at most 2 sub-intervals of each
length. \end{lemma}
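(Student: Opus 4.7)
The plan is to prove this by producing an explicit dyadic decomposition anchored at a single well-chosen splitting point, then counting the pieces via a binary-expansion / AM-GM estimate.

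First I would locate the splitting point. Let $[a,b]$ be the given interval and let $K$ be the largest integer for which $[a,b]$ contains a multiple of $2^K$; let $m$ be such a multiple. The first key observation is that $m$ is unique: if $m_1 < m_2$ were two multiples of $2^K$ in $[a,b]$, then $m_2 - m_1 \ge 2^K$, and among any two consecutive multiples of $2^K$ one is a multiple of $2^{K+1}$, giving a multiple of $2^{K+1}$ inside $[a,b]$ and contradicting the maximality of $K$. Uniqueness immediately forces $m - a < 2^K$ and $b - m < 2^K$, because $m - 2^K$ and $m + 2^K$ are multiples of $2^K$ that must lie outside $[a,b]$.

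Next I would decompose $[a,m]$ and $[m,b]$ separately, using the binary expansions of their lengths. Write $\ell_L := m - a = 2^{p_1} + 2^{p_2} + \cdots + 2^{p_r}$ with $p_1 > p_2 > \cdots > p_r$; since $\ell_L < 2^K$, every $p_j < K$. Starting from $m$ and moving left, take the sub-intervals of lengths $2^{p_1}, 2^{p_2}, \ldots$ in order. Each such sub-interval $[c, c+2^{p_j}]$ has $c = m - \sum_{i \le j} 2^{p_i}$, and $c$ is a multiple of $2^{p_j}$ because $m$ is a multiple of $2^K$ and each $p_i \ge p_j$. The symmetric construction handles $[m,b]$ starting from $m$ and moving right. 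This yields a valid partition in which all sub-interval lengths on the left side are distinct powers of $2$ and likewise on the right, so each power of $2$ appears at most twice overall.

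Finally, to count, each side contributes at most $\lfloor\log \ell\rfloor + 1$ pieces when $\ell \ge 1$, so the total is bounded by $(\log \ell_L + 1) + (\log \ell_R + 1) = 2 + \log(\ell_L \ell_R)$. Since $\ell_L + \ell_R = M$, AM-GM gives $\ell_L \ell_R \le (M/2)^2$, and the total is at most $2 + 2\log(M/2) = 2\log M$. The degenerate case where $m = a$ or $m = b$ (one side empty) must be handled separately: then the non-empty side has length exactly $M$ and contributes at most $\log M + 1$ pieces, which is $\le 2\log M$ precisely because of the hypothesis $M > 1$.

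The main obstacle I anticipate is not the construction itself but making the counting tight: a naive bound of $2K$ pieces is loose, and recovering $2\log M$ requires the AM-GM step together with the careful observation that $\ell_L + \ell_R = M$. The boundary case $M = 2$ and the one-sided case are where the inequality $\log M \ge 1$ is used, so it is essential that the hypothesis $M > 1$ is invoked there.
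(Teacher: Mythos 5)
Your proof is correct, and it takes a genuinely different route from the paper. The paper proves the lemma by induction on $M$: for the base case (length $2$ or $3$) it's immediate, and for the inductive step one peels off at most two unit sub-intervals so that the residual interval has even endpoints, rescales by $2$, and applies the hypothesis, giving the recurrence $f(M)\le 2+f(M/2)$. The ``at most two of each length'' claim then follows because length-$2^i$ intervals appear only at recursion depth $i+1$. Your argument is instead a direct, non-recursive construction: anchor at the unique multiple $m$ of $2^K$ for the maximal admissible $K$, decompose each of $[a,m]$ and $[m,b]$ greedily according to the binary expansion of its length (largest blocks adjacent to $m$), and verify the alignment condition arithmetically. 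Your construction makes the dyadic structure of the partition fully explicit and gives the ``at most two of each length'' claim for free (distinct powers on each side), which in the paper requires a separate observation about recursion depth. The trade-off is on the counting side: the paper's recurrence telescopes trivially, whereas your bound needs the AM--GM step $\ell_L\ell_R\le(M/2)^2$ plus a separate check of the one-sided degenerate case $m\in\{a,b\}$. Both are clean; yours is arguably more informative about what the partition actually looks like.
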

\begin{proof}
Use induction on the length of the interval. The lemma is clearly
true for intervals of length 2 or 3. Otherwise, use at most 2 sub-intervals
of length 1 at the beginning and end of the given interval, so that the
\emph{residual} interval (after the sub-intervals of size 1 are deleted)
begins and ends at an even integer. To cover the residual interval, divide
all integers in the (residual) problem by 2, and apply the induction
hypothesis; we use at most $2 + (2 \log {M/2}) \le 2 \log M$
sub-intervals in total. It is easy to see that we use at most 2 sub-intervals
of each length; intervals of length $2^i$ are used at the $(i+1)$th level
of recursion.
\end{proof}

For ease of notation, we let $\ell$ denote $\log \lmax$ for the rest of
this sub-section. Given an instance of \tw, for each vertex $v$ with
interval $I(v)$, we use lemma~\ref{lemma:interval-partition} to
partition $I(v)$ into at most $2\ell$ sub-intervals. We label the
sub-intervals of $I(v)$ as follows: For each $1 \le i \le \ell$, the
first sub-interval of length $2^i$ is labeled $I^1_i(v)$ and the second
sub-interval $I^2_i(v)$. (Note that there may be no sub-intervals of
length $2^i$.)

We now construct a set of at most $2\ell$ restricted versions of the
given instance. We call these restricted versions $B^1_1, B^1_2,
\ldots B^1_\ell$ and $B^2_1, B^2_2, \ldots B^2_\ell$, such that the
interval for vertex $v$ in $B^b_i$ is $I^b_i(v)$. If $I^b_i(v)$ was not
an interval used in the partition of $I(v)$, $v$ is not present in
the restricted version. (Equivalently, it has reward 0 or an empty
time-window.)

Consider an arbitrary restricted instance $B^b_i$. All vertices in this
instance of \tw~have intervals of length $2^i$, and
all time-windows begin at an integer that is a multiple of $2^i$. Hence,
any 2 vertices either have time-windows that are identical, or entirely
disjoint. This means that $B^b_i$ is a modular instance, so we can
break it into sub-problems, and use a $(2+\eps)$-approximation
to orienteering in the sub-problems to obtain a $(2+\eps)$-approximation
for the restricted instance.

By proposition~\ref{prop:restricted-versions}, one of the restricted
versions has an optimal solution that collects reward at least
$\frac{\opt}{2 \ell}$. Using a $(2+\eps)$-approximation for this
restricted version gives us a
$(2+\eps)\times 2\ell = O(\log L_{max})$-approximation for \tw
when all interval endpoints are integers.

\subsection{A $O(\log n)$-approximation when $L \le 2$}
\label{subsec:boundedL}

For an instance of \tw~when $L =
\frac{L_{max}}{L_{min}} \le 2$, we begin by scaling all times
so that $L_{min} = 1$ (and so $L_{max} \le 2$). Note that even
if all release times and deadlines were integral prior to scaling,
they may not be integral in the scaled version.

For each vertex $v$, we partition $I(v) = [R(v),D(v)]$ into
3 sub-intervals: $I_1(v) = [R(v), a]$, $I_2(v) = [a,b]$, and
$I_3(v) = [b, D(v)]$, where $a = \floor{R(v)+1}$ (that is,
the next integer strictly greater than the release time) and
$b = \ceil{D(v)-1}$ (the greatest integer strictly less than
the deadline). The figure below illustrates the partitioning
of intervals. Note that $I_2(v)$ may be a point, and in this
case, we ignore such a sub-interval.

\begin{figure}[h]
\begin{center}
\begin{tikzpicture}[xscale=2]

\draw (0,0) -- (6,0);
\foreach \x in {2,3,5}
{\draw[dashed] (\x,0) -- (\x,1.75);}
\foreach \x in {0,...,6}
{\draw (\x,-0.05) -- (\x,0.05);  \node at (\x,-0.25) {$\x$};  }

\draw (1.5,0.7) -- (3.6,0.7);
\draw (4.4,0.7) -- (5.7,0.7);

\draw[<->] (1.5,1) -- (2,1); \node at (1.75,1.25) {$I_1(u)$};
\draw[<->] (2,1.1) -- (3,1.1); \node at (2.5,1.35) {$I_2(u)$};
\draw[<->] (3,1) -- (3.6,1); \node at (3.3,1.25) {$I_3(u)$};

\draw[<->] (4.4,1) -- (5,1); \node at (4.7,1.25) {$I_1(v)$};
\draw[<->] (5,1) -- (5.7,1); \node at (5.35,1.25) {$I_3(v)$};

\end{tikzpicture}
\end{center}
\caption{We illustrate the partitioning of 2 intervals
into sub-intervals. Note that on the right, $I_2(v)$ is
empty.}
\end{figure}
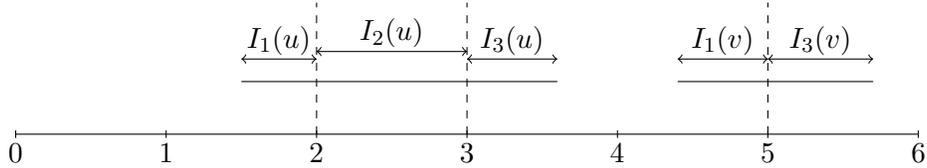

We now construct 3 restricted versions of the given instance ---
$B_1$, $B_2$, and $B_3$ --- such that the interval for any vertex
$v$ in $B_i$ is simply $I_i(v)$. By
proposition~\ref{prop:restricted-versions}, one of these has
an optimal solution that collects at least a third of the
reward collected by an optimal solution to the original
instance. Suppose this is $B_2$. All time-windows have length
exactly 1, and start and end-points are integers. Therefore,
$B_2$ is a modular instance, and we can get a
$(2+\eps)$-approximation to the optimal solution in $B_2$;
this gives a $(6+\eps)$-approximation to the original instance.

Dealing with $B_1$ and $B_3$ is not nearly as easy; they are
not quite modular. Every interval in $B_1$ has length at most
1, and ends at an integer; for $B_3$, intervals have length
at most 1 and start at an integer. We illustrate how to
approximate a solution for $B_3$ within a factor of $O(\log n)$;
the algorithm for $B_1$ is identical except that release times
and deadlines are to be interchanged.

For $B_3$, we can partition the vertex set into $V_1, V_2,
\ldots V_m$, such that all vertices in $V_i$ have the same
(integral) release time, and any vertex in $V_i$ is visited
before any vertex in $V_j$ for $j > i$. Figure 2 shows such a
partition. The deadlines for vertices in $V_i$ may be all
distinct. However, we can solve an instance of \deadline~in
each $V_i$ separately, and paste the solutions together using
dynamic programming. The solution we obtain will collect at
least $\Omega(1/\log n)$ of the reward of an optimal solution
for $B_3$, since there is a $O(\log n)$-approximation for
\deadline~(\cite{timewindow}). Therefore, this gives us a $3
\times O(\log n) = O(\log n)$-approximation to the original
instance.

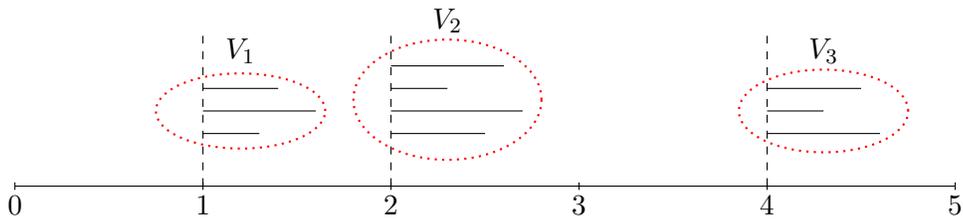
\begin{figure}[h]
\begin{center}
\begin{tikzpicture}[xscale=2.5]

\draw (0,0) -- (5,0);
\foreach \x in {1,2,4}
{\draw[dashed] (\x,0) -- (\x,2);}
\foreach \x in {0,...,5}
{\draw (\x,-0.05) -- (\x,0.05);  \node at (\x,-0.25) {$\x$};  }

\draw (1,1.3) -- (1.4,1.3); \draw (1,1) -- (1.6,1); \draw (1,0.7) -- (1.3,0.7);
\draw (2,1.6) -- (2.6,1.6) (2,1.3) -- (2.3,1.3) (2,1) -- (2.7,1) (2,0.7) -- (2.5,0.7);
\draw (4,1.3) -- (4.5,1.3); \draw (4,1) -- (4.3,1); \draw (4,0.7) -- (4.6,0.7);

\draw[dotted,red,thick] (1.2,1) ellipse (0.45cm and 0.5cm); 
\draw[dotted,red,thick] (2.3,1.15) ellipse (0.5cm and 0.8cm); 
\draw[dotted,red,thick] (4.3,1) ellipse (0.45cm and 0.55cm); 

\node at (1.2,1.8) {$V_1$}; \node at (2.3,2.2) {$V_2$}; \node at (4.3,1.8) {$V_3$};

\end{tikzpicture}
\end{center}
\caption{In $B_3$, all time-windows start at an integer and have length
at most 1. Each set of vertices whose windows have a common beginning
corresponds to a sub-problem that is an instance of orienteering with
deadlines.}
\end{figure}

Similarly, we can obtain a $O(\log n)$-approximation for $B_1$
using the $O(\log n)$-approximation algorithm for orienteering
with release times. Therefore, when $L \le 2$, we have a
$O(\log n)$-approximation for \tw.

\subsection{Putting the pieces together}

An arbitrary instance of \tw~may have $L > 2$,
and interval end-points may not be integers. However, we can combine
the algorithms from the two preceding sections to deal with such
instances. We begin by scaling all times such that the shortest
interval has length 1; the longest interval now has length
$L = \frac{\lmax}{\lmin}$, where $\lmax$ and $\lmin$ are the lengths
of the longest and shortest intervals in the original instance.

We now construct 3 restricted versions of the scaled instance: $B_1$,
$B_2$, and $B_3$. For any vertex $v$ with interval $[R(v),D(v)]$ in the
scaled instance, we construct 3 sub-intervals. $I_1(v) = [R(v),a]$,
$I_2(v) = [a,b]$, and $I_3(v) = [b,D(v)]$, where $a = \ceil{R(v) + 1}$
and $b = \floor{D(v) - 1}$. As before, the interval for $v$ in the
instance $B_i$ is $I_i(v)$.

One of the restricted versions collects at least a third of the reward
of the original instance. Suppose this is $B_1$ or $B_3$. All intervals
in $B_1$ and $B_3$ have length between 1 and 2 by our construction.
Therefore, we can use the $O(\log n)$-approximation algorithm from
section~\ref{subsec:boundedL} to collect at least $\Omega(1/\log n)$
of the reward of an optimal solution to the original instance. It now
remains only to consider the case that $B_2$ collects more than a third
of the reward. In $B_2$, the end-points of all time-windows are integral,
and the longest interval has length less than $L$. We can now use the
algorithm of section~\ref{subsec:integer-endpoints} to obtain a
$O(\log L)$-approximation.

Therefore, our combined algorithm is a
$\max\{O(\log n), O(\log L)\}$-approximation for \tw.

\subsection{Towards a better approximation, and arbitrary endpoints}
In the previous sub-section, we obtained an approximation ratio of 
$\max\{(O(\log n), O(\log L)\}$; we would like to improve this ratio
to $O(\log n)$. Unfortunately, it does not seem easy to do this
directly. A natural question, then, would be to obtain a ratio of
$O(\log L)$; this is equivalent to a constant-factor approximation
for the case when $L \le 2$. However, this is is no easier than finding
a $O(\log n)$-approximation for arbitrary instances of \tw, as we show
in the next proposition.

\begin{prop} \label{arbitends}
A constant-factor approximation algorithm for \tw~with $L \le 2$
implies a $O(\log n)$-approximation for arbitrary time-windows.
\end{prop}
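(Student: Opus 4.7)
The plan is to use the hypothesized algorithm $A$ for $L \le 2$ inside a length-class reduction, and then to bound the number of relevant length classes by $O(\log n)$ rather than the naive $O(\log \lmax)$.

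For the length-class reduction, I would apply Lemma~\ref{lemma:interval-partition} to each vertex's interval to obtain power-of-two sub-intervals, and for each length class $2^i$ form a restricted instance $B_i$ consisting of the length-$2^i$ sub-intervals. Within each $B_i$ all time-windows have length exactly $2^i$, so $L \le 2$ and $A$ yields an $O(1)$-approximation to $\opt(B_i)$. Proposition~\ref{prop:restricted-versions} then immediately gives an $O(\log \lmax)$-approximation for the original instance.

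To improve $\log \lmax$ to $\log n$, I would exploit that $\opt$ visits at most $n$ vertices. Let $M$ denote the smallest window length among the vertices used by $\opt$; one may guess $M$ to within a factor of $2$ by polynomial enumeration. Split the vertex set into a \emph{narrow-window} part with $L(v) \in [M, n^{O(1)} M]$ and a \emph{wide-window} part with larger $L(v)$. On the narrow-window sub-instance the length-class decomposition from the first step has only $O(\log n)$ non-empty classes, so Proposition~\ref{prop:restricted-versions} combined with $A$ delivers an $O(\log n)$-approximation. On the wide-window sub-instance, each vertex's window length dwarfs its release time, and the instance should reduce up to a constant factor to \deadline, which is already known to admit an $O(\log n)$-approximation in both undirected and directed graphs. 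Taking the better of the two halves gives the claimed ratio.

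The hard part will be justifying the wide-window reduction to \deadline. Even when $L(v)$ is very large, $R(v)$ can be large in absolute terms, so simply relaxing $R(v) \leftarrow 0$ and running a \deadline~approximation may produce a walk that visits $v$ before its true release time and therefore collects no reward in the original instance. I expect the fix is to further partition wide-window vertices by the magnitude of $R(v)$ (or by the ratio $R(v)/L(v)$), and to argue regime by regime that the release-time relaxation is safe up to a constant-factor loss. Making this structural argument precise is the technical content of the proof and is the reason the equivalence stated in the preceding paragraph is ``no easier'' than a general $O(\log n)$-approximation.
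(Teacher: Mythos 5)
Your approach is genuinely different from the paper's, and unfortunately it has a gap that you yourself identify but do not close. The paper's proof is much more direct and avoids the case analysis entirely. It proves that a constant-factor approximation for $L \le 2$ implies a constant-factor approximation for \deadline, and then invokes the reduction in \cite{timewindow} from \tw~to \deadline, which loses only an $O(\log n)$ factor. The key trick you are missing is the reduction from \deadline~to the $L \le 2$ regime: given an instance of \deadline~on $G$ with start $s$, add a new start vertex $s'$ joined to $s$ by an edge of length $D_{\max} = \max_v D(v)$, keep all release times at $0$, and increase every deadline by $D_{\max}$. Now every time-window has length in $[D_{\max}, 2D_{\max}]$, so $L \le 2$, and walks from $s'$ in the new instance correspond reward-for-reward to walks from $s$ in the original. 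This single transformation replaces the entire narrow/wide case split, and it shows why the two problems are ``no easier'' than one another in a one-line construction.

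Your wide-window case is where the argument breaks down, and for exactly the reason you flag: having $L(v) \gg M$ says nothing about the size of $R(v)$, so relaxing release times to $0$ is not safe, and the proposed fix of partitioning by the magnitude of $R(v)$ would reintroduce an unbounded (or at best additional $O(\log)$) number of regimes, defeating the purpose. You also rely on Lemma~\ref{lemma:interval-partition}, which needs integral endpoints, whereas the proposition makes no integrality assumption; that needs to be addressed (the paper sidesteps it entirely by never decomposing by length class). The narrow-window half of your argument is plausible on its own, but without a complete argument for the wide-window half, or a valid reduction like the one above, the proof does not go through.
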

\begin{proof}
We show that a constant-factor approximation when $L \le 2$ implies
a constant-factor approximation for \deadline. It follows from an
algorithm of \cite{timewindow} that we can then obtain a
$O(\log n)$-approximation for \tw.

Given an arbitrary instance of \deadline~on graph $G(V,E)$, we add a
new start vertex $s'$ to $G$. Connect $s'$ to $s$ with an edge of length
$D_{max} = \max_v D(v)$. The release time of every vertex is 0, but all
deadlines are increased by $D_{max}$. Observe that all vertices have
time-windows of length between $D_{max}$ and $2 D_{max}$, so $L \le 2$. It
is easy to see that given any walk beginning at $s$ in the original
instance, we can find an equivalent walk beginning at $s'$ in the modified
instance that visits a vertex in its time-window iff the original walk
visited a vertex before its deadline in the given instance, and vice
versa. Therefore, a constant-factor approximation for the modified
instance of \tw~gives a constant-factor approximation for the original
instance of \deadline.
\end{proof}

We \emph{can}, however, obtain a constant-factor approximation for
\tw~when $L \le 2$ if we remove the restriction that
the walk must start and end at $s$ and $t$, the specified
endpoints. The algorithm of Frederickson and Wittman \cite{FW07} for
the case of $L=1$ can be adapted relatively easily to give a
constant-factor approximation for $L \le 2$. For completeness, we
sketch the algorithm here.

We construct 5 restricted versions $B_1, \ldots B_5$, of a given
instance $A$. For every vertex $v$, we create at most 5 sub-intervals
of $I(v)$ by breaking it at every multiple of $0.5$. (For instance
$[3.7,5.6]$ would be broken up into $[3.7,4], [4,4.5], [4.5,5],
[5,5.5], [5.5,5.6]$. Note that some intervals may have fewer than
5 sub-intervals.) The interval for $v$ in $B_1(v)$ is the first
sub-interval, and the interval in $B_5(v)$ is the last sub-interval,
regardless of whether $I(v)$ has 5 sub-intervals. $B_2$, $B_3$, and
$B_4$ each use one of any remaining sub-intervals.

$B_2$, $B_3$, and $B_4$ are modular instances, so if one of them is
the best restricted version of $A$, we can use a $(2+\eps)$-approximation
for orienteering to get reward at least $\frac{\opt(A)}{10 + \eps}$.
Exactly as in subsection~\ref{subsec:boundedL}, $B_1$ and $B_5$ are
not quite modular instances; in $B_1$, all deadlines are half-integral
but release times are arbitrary, and in $B_5$, all release times are
half-integral, but deadlines are arbitrary.

Suppose that $B_1$ is the best restricted version. The key insight is that
if the optimal walk in $B_1$ collects a certain reward starting at $s$ at
time $0$, there is a walk in $B_2$ starting at $s$ at time $0.5$ that
collects \emph{the same} reward. (This is the substance of Theorem 1 of
\cite{FW07}.) Therefore, if $B_1$ is the best restricted version, we
find a $(2+\eps)$-approximation to the best walk in $B_2$ starting at $s$
at time $0.5$; we are guaranteed that this walk collects reward at least
$\frac{\opt(A)}{10+\eps}$. Note that this walk may not reach the destination
vertex $t$ by the time limit, since we start $0.5$ time units late.
Similarly, if $B_5$ is the best restricted version, we can find a walk in
$B_4$ that collects reward $\frac{\opt(A)}{10+\eps}$ while beginning at
$s$ at time $-0.5$. (To avoid negative times, we can begin the walk at $s'$
at time 0, where $s'$ is the first vertex visited by the original walk
after time 0.) This walk is guaranteed to reach $t$ by the time limit, but
does not necessarily begin at $s$.

Therefore, this algorithm is a $O(1)$-approximation when $L \le 2$, or
a $O(\log L)$-approximation for general instances of \tw~with arbitrary
endpoints. We note that one \emph{cannot} use
this with proposition~\ref{arbitends} to get a $O(\log n)$-approximation
for \tw~with arbitrary endpoints. The dynamic program
for modular instances crucially uses the fact that we can specify both
endpoints for the sub-problems.

\section{Conclusions}
\label{sec:conclusion}

We conclude with some open problems:
\begin{itemize}
\item Can we obtain an improved approximation ratio for the general
  time-window problem that does not depend on the lengths of the
  intervals? In particular, is there an $O(\log n)$ (or even an
  $O(\log \opt)$) approximation for undirected graphs?

\item The current algorithms for \tw~use the
  orienteering algorithm in a black-box fashion. For directed graphs
  the current ratio for orienteering is $O(\log^2 \opt)$ and hence the
  ratio for \tw~is worse by additional logarithmic
  factors. Can we avoid using orienteering in a black-box fashion?  We
  note that the quasi-polynomial time algorithm of \cite{ChekuriP05}
  has the same approximation ratio of $O(\log \opt)$ for both the
  basic orienteering and time-window problems in directed graphs.  In
  fact, the algorithm gives the same approximation ratio even for the
  more general problem where each vertex has multiple disjoint time
  windows in which it can be visited.

\item For many applications, each vertex has multiple disjoint time-windows,
  and we receive credit for a vertex if we visit it within
  any of its windows. If each vertex has at most $k$ windows, a
  na\"{i}ve algorithm loses an extra factor of $k$ beyond the ratio
  for \tw, but no better approximation is known.
  Any non-trivial result would be of interest.
\end{itemize}

\medskip
\noindent
\textbf{Acknowledgments:} We thank Pratik Worah for several discussions of
algorithms and hardness results for orienteering with time-windows and other
variants.

\bibliographystyle{plain}

\end{document}